\theoremstyle{plain}
\newtheorem{theorem}{Theorem}[section]
\newtheorem{lemma}[theorem]{Lemma}
\theoremstyle{definition}
\newtheorem{remark}[theorem]{Remark}
\newcommand*{\PP}{\mathbb{P}}
\newcommand*{\ee}{\mathrm{e}}
\newcommand*{\cE}{\mathcal{E}}
\newcommand*{\cH}{\mathcal{H}}
\newcommand*{\cI}{\mathcal{I}}
\newcommand*{\cM}{\mathcal{M}}
\newcommand*{\cP}{\mathcal{P}}
\newcommand*{\cR}{\mathcal{R}}
\newcommand*{\cS}{\mathcal{S}}
\newcommand*{\cT}{\mathcal{T}}
\newcommand*{\cU}{\mathcal{U}}
\newcommand*{\cX}{\mathcal{X}}
\newcommand*{\N}{\mathbb{N}}
\newcommand*{\R}{\mathbb{R}}
\newcommand*{\St}{\mathrm{S}}
\newcommand*{\eps}{\varepsilon}
\newcommand*{\id}{\mathrm{id}}
\newcommand*{\supp}{\mathrm{supp}}
\newcommand*{\tr}{\mathrm{tr}}
\newcommand*{\ket}[1]{| #1 \rangle}
\newcommand{\proj}[1]{|#1\rangle\!\langle #1|}
\newcommand*{\Pos}{\mathrm{P}}
\newcommand*{\QMC}{\mathrm{MC}}
\newcommand*{\MD}{D_{\mathbb{M}}}
\newcommand*{\ci}{\mathrm{i}} 
\newcommand*{\di}{\mathrm{d}} 
\newcommand{\norm}[1]{\left\lVert#1\right\rVert}
\begin{document}

\title{\LARGE Necessary criterion for approximate recoverability}

\author{David Sutter}
\author{Renato Renner}

\affil{\small{Institute for Theoretical Physics, ETH Zurich, Switzerland}}
\date{}

\maketitle


\begin{abstract}
A tripartite state $\rho_{ABC}$ forms a Markov chain if there exists a recovery map $\cR_{B \to BC}$ acting only on the $B$-part that perfectly reconstructs $\rho_{ABC}$ from $\rho_{AB}$. To achieve an approximate reconstruction, it suffices that the conditional mutual information $I(A:C|B)_{\rho}$ is small, as shown recently. Here we ask what conditions are necessary for approximate state reconstruction. This is answered by a lower bound on the relative entropy between $\rho_{ABC}$ and the recovered state $\cR_{B\to BC}(\rho_{AB})$. The bound consists of the conditional mutual information and an entropic correction term that quantifies the disturbance of the $B$-part by the recovery map.

\end{abstract}
\
\section{Introduction}
A recovery map is a trace-preserving completely positive map that reconstructs parts of a composite system. More precisely, for a tripartite state $\rho_{ABC}$ on $A\otimes B \otimes C$ we can consider a recovery map $\cR_{B \to BC}$ from $B$ to $B\otimes C$ that reconstructs the $C$-part from the $B$-part only. If such a reconstruction is perfectly possible, i.e., if 
\begin{align} \label{eq_Markov}
\rho_{ABC} =  \cR_{B \to BC}(\rho_{AB}) 
\end{align}
we call $\rho_{ABC}$ a \emph{(quantum) Markov chain} in order $A \leftrightarrow B \leftrightarrow C$.\footnote{We usually omit the identity map and the identity operator in our notation when its use is clear from the context. For example, we write $\cR_{B \to BC}(\rho_{AB})$ instead of $(\cI_A \otimes \cR_{B \to BC})(\rho_{AB})$ and $\rho_B \rho_{AB} \rho_{B}$ instead of $(\id_A \otimes \rho_B) \, \rho_{AB} \, (\id_A \otimes \rho_B)$. We will drop the order of the Markov chain if it is $A \leftrightarrow B \leftrightarrow C$.} 

The structure of Markov chains is well understood. A state $\rho_{ABC}$ is a Markov chain if and only if there exists a decomposition of the $B$ system as $B = \oplus_{j} (b_j^L \otimes b_j^R)$ such that 
\begin{align} \label{eq_MarkovDec}
\rho_{ABC} = \bigoplus_j P(j) \, \rho_{A b_j^L} \otimes \rho_{b_j^R C} \, ,
\end{align}
with states $\rho_{A b_j^L}$ on $A \otimes b_j^L$, $ \rho_{b_j^R C}$ on $b_j^R \otimes C$, and a probability distribution $P$~\cite{HJPW04}.
A measure that is useful to describe Markov chains is the \emph{conditional mutual information} that is given by 
\begin{align}
  I(A : C | B) = \tr\, \rho_{A B C} \bigl( \log \rho_{A B C}  + \log \rho_B  - \log \rho_{A B}  - \log \rho_{B C}  \bigr)  \ ,
\end{align}
whenever the trace is defined, i.e., whenever the operator $\rho_{A B C} ( \log \rho_{A B C}  + \log \rho_B  - \log \rho_{A B}  - \log \rho_{B C})$ is trace class. One often restricts  to the case where the \emph{conditional von Neumann entropy} $H(A | B) = - D(\rho_{A B} \| {\id_A \otimes \rho_B})$ is finite, where $D(\rho\|\sigma):=\tr \rho (\log \rho - \log \sigma)$ denotes the relative entropy between $\rho$ and $\sigma$. Indeed, in this case, the data processing inequality~\cite{lindblad75,uhlmann77} implies that $H(A | B C) = -D(\rho_{A B C} \| {\id_A \otimes \rho_{B C}})$ is also finite, and hence the operators $\rho_{A B C} (\log \rho_{A B} - \log \rho_B)$ and $\rho_{A B C}(\log \rho_{A B C}  - \log \rho_{B C})$ are both trace class, implying that their difference is trace class, too. 
 We further note that for finite-dimensional Hilbert spaces the conditional mutual information may be written as $I({A:C|B})_{\rho}:=H(AB)_{\rho} + H(BC)_{\rho} - H(B)_{\rho} - H(ABC)_{\rho}$ where $H(A)_{\rho} := - \tr \, \rho_A \log \rho_A$ is the von Neumann entropy of the marginal state on $A$.

It has been shown that a state $\rho_{ABC}$ is a Markov chain if and only if its conditional mutual information $I(A:C|B)_{\rho}$ vanishes~\cite{Pet86,Pet03}. Furthermore, the \emph{Petz recovery map} (also known as \emph{transpose map})
\begin{align} \label{eq_PetzRecoveryMap}
\cT_{B\to BC}\, : \, X_B \mapsto \rho_{BC}^{\frac{1}{2}} ( \rho_B^{-\frac{1}{2}} X_B \rho_B^{-\frac{1}{2}} \otimes \id_C ) \rho_{BC}^{\frac{1}{2}}
\end{align}
recovers such states perfectly, i.e.,~\eqref{eq_Markov} holds with $\cR_{B\to BC} = \cT_{B\to BC}$.

Tripartite states $\rho_{ABC}$ that have a small conditional mutual information are called \emph{approximate Markov chains}. The justification for this terminology is a recent result~\cite{FR14} proving that for any state $\rho_{ABC}$ there exists a recovery map $\cR_{B \to BC}$ such that
\begin{align} \label{eq_FR}
I(A:C|B)_{\rho} \geq - \log F\big(\rho_{ABC},\cR_{B\to BC}(\rho_{AB})\big) \, ,
\end{align}
where $F(\tau,\omega):=\norm{\sqrt{\tau} \sqrt{\omega}}^2_1$ denotes the \emph{fidelity} between $\tau$ and $\omega$.\footnote{Recall that for any two states $\tau$ and $\omega$ we have $F(\tau,\omega) \in [0,1]$ and that $F(\tau,\omega)=1$ if and only if $\tau = \omega$ (see, e.g.,~\cite{nielsenChuang_book}).} Inequality~\eqref{eq_FR} shows that the Markov property~\eqref{eq_Markov} approximately holds whenever the conditional mutual information is small. 
However, there exist tripartite states with a small conditional mutual information whose distance to any Markov chain is nevertheless large~\cite{CSW12,ILW08}. As a consequence, approximate quantum Markov chains are not necessarily close to quantum Markov chains. We refer to Appendix~\ref{app_closeMarkov} for a more detailed explanation of this phenomenon.

Inequality~\eqref{eq_FR} has been refined in a series of works~\cite{BHOS14,TB15,SFR15,wilde15,STH15,JRSWW15,SBT16}. More precisely,  the initial bound from~\cite{FR14} has been strengthened by replacing the right-hand side of~\eqref{eq_FR} by the measured relative entropy between the original and the recovered state (see~\eqref{eq_measrelentropy} below for a definition). This result came with a novel  proof based on the notion of  quantum state redistribution~\cite{BHOS14}. The proof has later been simplified by utilizing tools from semidefinite programming~\cite{TB15}. In~\cite{SFR15} it was shown that there exists a universal recovery map, i.e., one that does not depend on the $A$ system, that satisfies~\eqref{eq_FR}. Another major step was the discovery that~\eqref{eq_FR}, as well as generalisations thereof, can be obtained by complex interpolation theory~\cite{wilde15}, providing further insight into the structure of the recovery map. In~\cite{STH15} an intuitive proof of~\eqref{eq_FR} based on the \emph{spectral pinching method} was presented. In~\cite{JRSWW15} it was shown that there exists an explicit recovery map (of the form~\eqref{eq_explicitRecoveryMap}) that satisfies~\eqref{eq_FR}. 
The most recent result~\cite[Theorem~4.1]{SBT16} shows that for any state $\rho_{ABC}$ we have
\begin{align} \label{eq_SBT}
I(A:C|B)_{\rho} \geq \MD\big(\rho_{ABC} \| \cP_{B\to BC}(\rho_{AB}) \big) \, ,
\end{align}
 for the explicit recovery map
 \begin{align} \label{eq_explicitRecoveryMap}
 \cP_{B \to BC}(\cdot):=\int_{-\infty}^{\infty} \beta_0(\di t) \,  \cP^{[t]}_{B \to BC}(\cdot) \quad \text{with} \quad \cP^{[t]}_{B \to BC}(\cdot)= \rho_{BC}^{\frac{1+\ci t}{2}} ( \rho_B^{-\frac{1+\ci t}{2}} X_B \rho_B^{-\frac{1-\ci t}{2}} \otimes \id_C ) \rho_{BC}^{\frac{1-\ci t}{2}} 
 \end{align}
 and the probability measure
 \begin{align} \label{eq_beta0}
\beta_0(\di t) := \frac{\pi}{2} \left( \cosh(\pi t) + 1\right)^{-1} \di t 
\end{align}
on $\R$.
$\MD$ denotes the \emph{measured relative entropy}, which is defined as 
\begin{align} \label{eq_measrelentropy}
\MD(\rho \| \sigma) := \sup \limits_{M \in \cM}  D( M(\rho) \| M(\sigma) ) \ ,
\end{align}
where $\cM$ is the set of all quantum-classical channels $M(\omega)= \sum_x (\tr M_x \omega) \proj{x}$ with $\{M_x\}$ a positive operator valued measure (POVM) and $\{\ket{x}\}$ an orthonormal basis.
A simple property of the measured relative entropy ensures that $\MD(\tau\| \omega) \geq - \log F(\tau,\omega)$ for all states $\tau, \omega$~\cite{BHOS14}, which shows that~\eqref{eq_SBT} implies~\eqref{eq_FR}. We further note that the recovery map $\cP_{B \to BC}$ given in~\eqref{eq_explicitRecoveryMap} is \emph{universal} in the sense that it only depends on $\rho_{BC}$ and it satisfies $ \cP_{B \to BC}(\rho_B)=\rho_{BC}$.
The interested reader can find additional information about the concepts and achievements around~\eqref{eq_FR} in~\cite{sutter_phd}. 

Inequality~\eqref{eq_FR} shows that there always exists a recovery map whose recovery quality (measured in terms of the logarithm of the fidelity) is of the order of the conditional mutual information. This shows that a small conditional mutual information is a sufficient condition for a state to be approximately recoverable.
In other words,~\eqref{eq_FR} gives an entropic characterization for the set of tripartite states that can be approximately recovered. 

In this work, we are interested in an opposite statement. This corresponds to an inequality that bounds the distance between $\rho_{ABC}$ and any reconstructed state $\cR_{B \to BC}(\rho_{AB})$ from below with an entropic functional of $\rho_{ABC}$ and the recovery map $\cR_{B \to BC}$ that involves the conditional mutual information. Such an inequality is the converse to~\eqref{eq_FR}, and gives a necessary condition for approximate recoverability.

\subsection{Main result}
For any trace-preserving completely positive map $\cE$ on a system $S$ we denote by $\mathrm{Inv}(\cE)$ the set of density operators $\tau$ on $S$ which are left invariant under the action of $\cE$, i.e., 
\begin{align}
  \mathrm{Inv}(\cE) := \{\tau: \, \cE(\tau) = \tau\} \ .
\end{align}
We may now quantify the deviation of any state $\rho$ from the set $\mathrm{Inv}(\cE)$ by
\begin{align} \label{eq_Lambda}
  \Lambda_{\max}(\rho \| \cE) := \inf_{\tau \in \mathrm{Inv}(\cE)}  D_{\max}(\rho \| \tau) \ ,
\end{align}
where $D_{\max}(\omega \| \sigma):=\inf\{\lambda \in \R : \omega \leq 2^{\lambda} \sigma \}$ denotes the  the \emph{max-relative entropy}.
The $\Lambda_{\max}$-quantity has the property that it is zero if and only if $\cE$ leaves $\rho$ invariant\footnote{Note that the max-relative entropy has a definiteness property which ensures that for a sequence $(\omega_k)_{k\in \N}$ of states such that $\lim_{k \to \infty} D_{\max}(\tau\|\omega_k) = 0$ we have $\lim_{k \to \infty} \omega_k = \tau$. This follows from the fact that $-\log F(\tau,\omega) \leq D_{\max}(\tau\| \omega)$~\cite{araki_82,berta16} and the definiteness property of the fidelity~\cite{Uhl76,Alberti1983}, i.e., $\lim_{k\to \infty}F(\tau,\omega_k)=1$ implies $\lim_{k\to \infty}\omega_k=\tau$.}, i.e., 
\begin{align}
  \Lambda_{\max}(\rho \| \cE)  = 0 \quad \iff \quad \cE(\rho) = \rho \ .
\end{align}

\paragraph{Main result.}
We prove that for any state $\rho_{ABC}$ on $A\otimes B \otimes C$ and any recovery map $\cR_{B \to BC}$ from the $B$ system to the $B\otimes C$ system we have 
\begin{align} \label{eq_MainResIntro}
D\big(\rho_{ABC} \| \cR_{B\to BC}(\rho_{AB}) \big) + \Lambda_{\max}(\rho_{AB}\| \cR_{B\to B}) \geq I(A:C|B)_{\rho}  \, ,
\end{align}
where $D(\tau\|\sigma):= \tr\, \tau \log \tau - \tr\, \tau \log \sigma$ if $\supp(\tau) \subseteq \supp(\sigma)$ and $+\infty$ otherwise denotes the \emph{relative entropy}, and $\cR_{B\to B}:= \tr_C \circ \cR_{B\to BC}$ is the action of the recovery map $\cR_{B\to BC}$ on $B$. 
We refer to Theorem~\ref{thm_main} for a more precise statement.

\paragraph{Cases where the $\Lambda_{\max}$-term vanishes.}
To interpret the term $\Lambda_{\max}$ in~\eqref{eq_MainResIntro}, note that the recovery map $\cR_{B \to B C}$ generally not only reads the content of system $B$ in order to generate $C$, but  also disturbs it. $\Lambda_{\max}$ quantifies the amount of this disturbance of $B$, taking system $A$ as a reference. In particular, $\Lambda_{\max}(\rho_{AB}\| \cR_{B\to B})=0$ if $\cR_{B \to B C}$ is ``read only'' on $B$, i.e., if $\rho_{AB}= \cR_{B\to B}(\rho_{AB})$. Inequality~\eqref{eq_MainResIntro} then simplifies to
\begin{align}
D\big(\rho_{ABC} \| \cR_{B\to BC}(\rho_{AB}) \big) \geq I(A:C|B)_{\rho}   \, .
\end{align}
We further note that in case $\cR_{B\to BC}$ is a recovery map that is ``read only'' on $B$ its output state $\sigma_{ABC}:=\cR_{B\to BC}(\rho_{AB})$ is a Markov chain since
\begin{align} \label{eq_whyMarkov}
H(A|B)_{\rho} \leq H(A|BC)_{\sigma} \leq H(A|B)_{\sigma}=H(A|B)_{\rho} \, ,
\end{align}
where the two inequality steps follow from the data-processing inequality~\cite{LieRus73_1,LieRus73} applied for $\cR_{B\to BC}$ and $\tr_C$, respectively and hence $I(A:C|B)_{\sigma}= H(A|B)_{\sigma} - H(A|BC)_{\sigma} = 0$.

\subsection{Tightness of the main result}
We next discuss several aspects concerning the tightness of~\eqref{eq_MainResIntro}. This will also give a better understanding about the role of the $\Lambda_{\max}$-term.
We first note that by combining~\eqref{eq_SBT} with~\eqref{eq_MainResIntro} we obtain
\begin{align}
\MD\big(\rho_{ABC} \| \cP_{B\to BC}(\rho_{AB}) \big) & \leq I(A:C|B)_{\rho} \label{eq_tight1} \\
&\leq \inf_{\cR_{B\to BC}}\left \lbrace D\big(\rho_{ABC} \| \cR_{B\to BC}(\rho_{AB}) \big) + \Lambda_{\max}(\rho_{AB}\| \cR_{B\to B}) \right \rbrace \, , \label{eq_ourBound}
\end{align}
where the recovery map $\cP_{B \to BC}$ on the left-hand side is given by~\eqref{eq_explicitRecoveryMap} and the infimum is over all recovery maps $\cR_{B\to BC}$ that map $B$ to $B\otimes C$. The main difference between the lower and upper bound for the conditional mutual information given by~\eqref{eq_tight1} and~\eqref{eq_ourBound}, respectively, is the $\Lambda_{\max}$-term.
\paragraph{Classical case.} 
Inequalities~\eqref{eq_tight1} and~\eqref{eq_ourBound} hold with equality in case $\rho_{ABC}$ is a classical state, i.e.,
\begin{align} \label{eq_classicalState}
\rho_{ABC} = \sum_{a,b,c} P_{ABC}(a,b,c) \proj{a}_A \otimes \proj{b}_B \otimes \proj{c}_C \, ,
\end{align}
for some probability distribution $P_{ABC}$. To see this, we first note that if $\rho_{ABC}$ is classical (in which case $\rho_{ABC}$ and all its marginals commute pairwise) a straightforward calculation gives
\begin{align} \label{eq_classical_Petz}
I(A:C|B)_{\rho}= D\big(\rho_{ABC} \| \cT_{B\to BC}(\rho_{AB}) \big)   \, ,
\end{align}
for the Petz recovery map $\cT_{B\to BC}$ defined in~\eqref{eq_PetzRecoveryMap}. Furthermore, if $\rho_{ABC}$ is classical $\cT_{B \to BC}(\rho_{AB}) = \rho_{BC} \rho_{B}^{-1}  \rho_{AB}$. We further see that $\tr_C \cT_{B \to BC}(\rho_{AB}) = \cT_{B \to B}(\rho_{AB}) =\rho_{AB}$ and hence
\begin{align}
\Lambda_{\max}(\rho_{AB}\| \cT_{B\to B}) =0 \, .
\end{align}
 This shows that in the classical case~\eqref{eq_ourBound} is an equality and that the Petz recovery map $\cT_{B \to BC}$ minimizes the right-hand side of~\eqref{eq_ourBound}. 
 
We further note that in the classical case the measured relative entropy coincides with the relative entropy and the rotated Petz recovery map $\cP_{B\to BC}$ that satisfies~\eqref{eq_tight1} simplifies to the Petz recovery map~$\cT_{B\to BC}$. This together with~\eqref{eq_classical_Petz} then shows that~\eqref{eq_tight1} holds with equality in the classical case.  
\paragraph{Necessity of the $\Lambda_{\max}$-term.}
A natural question regarding~\eqref{eq_MainResIntro} is whether the $\Lambda_{\max}$-term is necessary. Here we show that this is indeed the case by constructing an example proving that a large conditional mutual information does not imply that all recovery maps are bad and hence the $\Lambda_{\max}$-term is indispensable.

More precisely, in Section~\ref{sec_constructionExample} we construct a generic example showing that for any constant $\kappa <\infty$ there exists a classical state $\rho_{ABC}$ (i.e., a state of the form~\eqref{eq_classicalState}) such that
\begin{align} \label{eq_counterExample1}
\kappa \, D_{\max}\big(\rho_{ABC} \| \cR_{B \to BC}(\rho_{AB})\big) < I(A:C|B)_{\rho}   \, ,
\end{align}
for some recovery map $\cR_{B\to BC}$ that satisfies $\cR_{B \to BC}(\rho_{B})=\rho_{BC}$. A similar construction (also given in Section~\ref{sec_constructionExample}) shows that there exists another classical state $\rho_{ABC}$ such that 
\begin{align}  \label{eq_counterExample2}
\kappa \, D_{\max}\big(  \cR_{B \to BC}(\rho_{AB}) \|  \rho_{ABC} \big)  < I(A:C|B)_{\rho}  \, ,
\end{align}
for some recovery map $\cR_{B\to BC}$ that satisfies $\cR_{B \to BC}(\rho_{B})=\rho_{BC}$. 

These constructions therefore show that an additional term like $\Lambda_{\max}(\rho_{AB}\| \cR_{B\to B})$, which measures the deviation from a ``read only'' map on $B$, is necessary to obtain a lower bound on the relative entropy between a state and its reconstructed version. 
The example has an even stronger implication. It shows that the $\Lambda_{\max}$-term is necessary even if one tries to bound the max-relative entropy between a state and its reconstruction version, i.e., $D_{\max}(\rho_{ABC}\| \cR_{B\to BC}(\rho_{AB}))$, which cannot be smaller than $D(\rho_{ABC}\| \cR_{B\to BC}(\rho_{AB}))$, from below.\footnote{The max-relative entropy and its properties are discussed in more detail in Section~\ref{sec_oneShotEntropies}. It is the largest sensible relative entropy measure.}  The two strict inequalities~\eqref{eq_counterExample1} and~\eqref{eq_counterExample2} show that the $\Lambda_{\max}$-term is also necessary if one would allow for swapping the two arguments of the relative (or even max-relative) entropy. Furthermore, restricting the set of recovery maps such that they satisfy $\cR_{B\to BC}(\rho_{B})=\rho_{BC}$ still requires the $\Lambda_{\max}$-term. 

Since for classical states~\eqref{eq_classical_Petz} holds, these examples also show that for the task of minimizing the relative entropy between $\rho_{ABC}$ and its reconstructed state $\cR_{B\to BC}(\rho_{AB})$ the Petz recovery map can be far from being optimal --- even in the classical case. The examples further show that considering recovery maps that leave the $B$ system invariant (i.e., they only ``read'' the $B$-part) is a considerable restriction.
 We refer to Section~\ref{sec_constructionExample} for more information about these examples.

\paragraph{Optimality of the $\Lambda_{\max}$-term.} 
Even in the case where $\rho_{ABC}$ is not classical,~\eqref{eq_MainResIntro} is still close to optimal. We present two arguments why this is the case. First, we show that the $\Lambda_{\max}$-term cannot be replaced by a relative entropy measure that is smaller than the max-relative entropy. More precisely,~\eqref{eq_MainResIntro} is violated if the max-relative entropy in the definition of $\Lambda_{\max}(\rho_{AB} \| \cR_{B\to B})$ is replaced with any $\alpha$-R\'enyi relative entropy for any $\alpha \in [\frac{1}{2},\infty)$. We refer to Section~\ref{sec_Ex_tight} for more information. 

The $\Lambda_{\max}$-term in~\eqref{eq_MainResIntro} quantifies the max-relative entropy distance between $\rho_{AB}$ and its closest state that is invariant under $\cR_{B\to B}$. A natural question is if~\eqref{eq_MainResIntro} remains valid if the $\Lambda_{\max}$-term is replaced by the (max-relative entropy) distance between $\rho_{A B}$ and $\cR_{B \to B}(\rho_{A B})$, i.e., $  D_{\max}(\rho_{A B} \| \cR_{B \to B}(\rho_{A B}))$. This however is ruled out. To see this we recall that by the example mentioned above in~\eqref{eq_counterExample1} there exists a tripartite state $\rho_{ABC}$ and a recovery map $\cR_{B \to BC}$ such that
\begin{align}
2 D_{\max}\bigl(\rho_{A B C} \| \cR_{B \to B C}(\rho_{A B})\bigr) < I(A: C | B)_{\rho}  \, .
\end{align}
The data-processing inequality for the max-relative entropy~\cite{datta09,marco_book} and the fact that the max-relative entropy cannot be smaller than the relative entropy then imply
\begin{align}
   D\bigl(\rho_{A B C} \| \cR_{B \to B C}(\rho_{A B})\bigr) < I(A: C | B)_{\rho} - D_{\max}\bigl(\rho_{A B} \| \cR_{B \to B}(\rho_{A B}) \bigr) \, ,
\end{align}
which shows that~\eqref{eq_MainResIntro} is no longer valid for the modified $\Lambda_{\max}$-term described above.

\subsection{Related results} 
Using the continuity of the conditional entropy, it is possible to derive an upper bound for the conditional mutual information of a state $\rho_{ABC}$ in terms of its distance to any reconstructed state $\sigma_{ABC}:=\cR_{B\to BC}(\rho_{AB})$, where $\cR_{B \to BC}$ denotes an arbitrary recovery map~\cite{berta15,FR14}. This leads to a lower bound on the relative entropy between $\rho_{ABC}$ and $\cR_{B\to BC}(\rho_{AB})$ that however depends on the dimension of the $A$ system. To see this, let $[0,1] \ni x \mapsto h(x):=-x \log x -(1-x) \log (1-x)$ denote the \emph{binary entropy function} and let $\Delta(\tau,\omega):=\frac{1}{2} \norm{\tau-\omega}_1$ be the \emph{trace distance} between $\tau$ and $\omega$. 
The data-processing inequality~\cite{LieRus73_1,LieRus73} implies that
\begin{align}
I(A:C|B)_{\rho} = H(A|B)_{\rho} -H(A|BC)_{\rho} \leq H(A|BC)_{\sigma} - H(A|BC)_{\rho} \, .
\end{align}
By the improved Alicki-Fannes inequality~\cite[Lemma~2]{Winter2016} we find
\begin{align}
I(A:C|B)_{\rho} 
&\leq 2 \Delta(\rho,\sigma) \log(\dim A) + \big(1+ \Delta(\rho,\sigma)\big) h\!\left( \frac{\Delta(\rho,\sigma)}{1+\Delta(\rho,\sigma)} \right)\\
&\leq 2 \sqrt{\Delta(\rho,\sigma)} \big( \log(\dim A) + 1 \big) \, ,
\end{align}
where we used that $(1+x) h(\frac{x}{1+x}) \leq 2 \sqrt{x}$ for all $x \in [0,1]$ and $\Delta(\rho,\sigma) \in [0,1]$. Together with Pinsker's inequality~\cite{Pinsker60,Csiszar67} this gives
\begin{align} \label{eq_dimensionUB}
D\big(\rho_{ABC} \| \cR_{B \to BC}(\rho_{AB}) \big)
\geq \frac{2}{\ln 2} \Delta\big(\rho_{ABC},\cR_{B\to BC}(\rho_{AB})\big)^2 
\geq \frac{I(A:C|B)^4_{\rho}}{8 \ln 2 \big(\log(\dim A) +1 \big)^4}\, .
\end{align}
The fact that this bound explicitly depends on the dimension of the system $A$ is unsatisfactory. Furthermore, the example discussed above in~\eqref{eq_counterExample1} shows that such a dependence on the dimension is unavoidable.

A different approach to derive an upper bound for the conditional mutual information of a state $\rho_{ABC}$ in terms of its distance to a reconstructed state $\cR_{B\to BC}(\rho_{AB})$ was taken in~\cite[Theorem~11 and Remark~12]{DW15} (see also~\cite[Proposition~F.1]{SBT16}). It was shown that for any state $\rho_{ABC}$
\begin{align} \label{eq_badUB}
\int_{-\infty}^{\infty} \beta_0(\di t) \, \bar D_2 \big(\rho_{ABC} \| \cP_{B \to BC}^{[t]}(\rho_{AB}) \big) \geq I(A:C|B)_{\rho} \, ,
\end{align}
where $\beta_0$ and $\cP_{B \to BC}^{[t]}$ are given in~\eqref{eq_beta0} and~\eqref{eq_explicitRecoveryMap}, respectively and $\bar D_2(\tau \| \omega) := \log \tr \, \tau^2 \omega^{-1}$ denotes Petz' R\'enyi relative entropy of order $2$~\cite{Petz86}. 
The examples discussed above imply that the left-hand side of~\eqref{eq_badUB} can be much larger than the relative entropy between $\rho_{ABC}$ and $\cR_{B\to BC}(\rho_{AB})$ for the optimal recovery map $\cR_{B \to BC}$. In other words, rotated Petz recovery maps are generally far from optimal recovery maps.
\section{One-shot relative entropies} \label{sec_oneShotEntropies}
The goal of this section is to derive a triangle-like inequality for the relative entropy (see Lemma~\ref{lem_triangleD}) which will be used in the proof of our main result, i.e., Theorem~\ref{thm_main}. To understand Lemma~\ref{lem_triangleD} we need to review a few properties of one-shot relative entropy measures.

\subsection{Preliminaries}
Let $\St(A)$ and $\Pos(A)$ denote the set of density and nonnegative operators on $A$, respectively. For any linear operator $L$ on $A$, the \emph{trace norm} is given by $\norm{L}_1 := \tr |L|$ with $|L|:=\sqrt{L^\dagger L}$. For $\rho,\sigma \in \Pos(A)$ we write $\rho \ll \sigma$ if the support of $\rho$ is contained in the support of $\sigma$. Within this document our Hilbert spaces are assumed to be separable.
We define the \emph{min-relative entropy}~\cite{renner_phd} as
\begin{align}
D_{\min}(\rho \| \sigma):=-\log \norm{\sqrt{\rho} \sqrt{\sigma}}^2_1  = - \log F(\rho,\sigma)
\end{align}
and the \emph{max-relative entropy}~\cite{datta09,renner_phd} as
\begin{align}
D_{\max}(\rho \| \sigma):=\inf \{ \lambda \in \R : \rho \leq 2^{\lambda} \sigma \} \, .
\end{align}
As the names suggest, the min-relative entropy cannot be larger than the max-relative entropy, or more precisely we have
\begin{align} \label{eq_DminMax}
D_{\min}(\rho\|\sigma) \leq D(\rho \| \sigma) \leq D_{\max}(\rho\| \sigma) \, ,
\end{align}
with strict inequalities in the generic case~\cite{MLDSFT13,marco_book}. The max-relative entropy turns out to be the largest relative entropy measure that satisfies the data-processing inequality and is additive under tensor products~\cite[Section~4.2.4]{marco_book}. We also note that it follows immediately from the definition that  the max-relative entropy cannot increase if the same positive map is applied to both arguments (see also~\cite[Theorem~2]{Hermes2017}  for a more general statement).

The min- and max-relative entropy can be seen as the extreme points of a family of relative entropies called \emph{minimal quantum R\'enyi relative entropy} (also known as \emph{sandwiched R\'enyi  relative entropy})~\cite{MLDSFT13,wilde_strong_2014}. For $\alpha \in [\frac{1}{2},1) \cup (1,\infty)$ and $\rho,\sigma \in \Pos(A)$, this family is defined as
\begin{align}  \label{eq:min_entropy} 
D_{\alpha}(\rho \| \sigma):= \begin{cases}
\frac{1}{\alpha-1} \log  \frac{1}{\tr \rho} \tr \Big(\sigma^{\frac{1-\alpha}{2 \alpha}} \rho \sigma^{\frac{1-\alpha}{2 \alpha}} \Big)^\alpha &\text{if $ \rho \ll \sigma \lor \alpha<1 $}\\
\infty &\text{otherwise} \, .
\end{cases}
\end{align}
It can be shown~\cite{MLDSFT13} that
\begin{align}
D_{\frac{1}{2}}(\rho \| \sigma) = D_{\min}(\rho \| \sigma), \quad \lim_{\alpha \to 1} D_{\alpha}(\rho \| \sigma) = D(\rho \| \sigma), \quad \text{and} \quad \lim_{\alpha \to \infty} D_{\alpha}(\rho \| \sigma) = D_{\max}(\rho \| \sigma) \, .
\end{align}
Furthermore the minimal quantum R\'enyi relative entropy is monotone in $\alpha \in [\frac{1}{2},\infty)$~\cite[Theorem~7]{MLDSFT13}, i.e.,
\begin{align} \label{eq_mono}
D_{\alpha}(\rho \| \sigma) \leq D_{\alpha'}(\rho \| \sigma) \quad \text{for} \quad \alpha \leq \alpha' \, .
\end{align}

\subsection{Triangle-like inequality for relative entropy}
It is well-known that the relative entropy does not satisfy the triangle inequality. For the three (classical) qubit states $\rho=\frac{1}{2} \proj{0} + \frac{1}{4} \id_2$, $\sigma=\frac{1}{2} \proj{1} + \frac{1}{4} \id_2$, and $\omega = \frac{ 1}{2} \id_2$ we have $D(\rho \| \sigma) > D(\rho \| \omega) + D(\omega \| \sigma)$. The following lemma proves a triangle-like inequality for the minimal quantum R\'enyi relative entropy.
\begin{lemma} \label{lem_triangleD}
Let $A$ be a separable Hilbert space, let $\rho \in \St(A)$, $\sigma,\omega \in \Pos(A)$ and let $\alpha \in [\frac{1}{2},1]$. Then
\begin{align} \label{eq_triangleD}
D_{\alpha}(\rho \| \sigma) \leq D_{\alpha}(\rho \| \omega) + D_{\max}(\omega \| \sigma) \, .
\end{align}
\end{lemma}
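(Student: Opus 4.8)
The plan is to derive \eqref{eq_triangleD} from two structural properties of the minimal quantum R\'enyi relative entropy: its behaviour under scalar rescaling of the second argument, and its monotonicity (in the operator order) in the second argument. Write $\mu := D_{\max}(\omega \| \sigma)$. If $\mu = +\infty$ there is nothing to prove, so assume $\omega \ll \sigma$; then the defining infimum is attained and $\omega \le 2^{\mu}\sigma$, equivalently $2^{-\mu}\omega \le \sigma$. The idea is to compare $D_\alpha(\rho\|\sigma)$ with $D_\alpha(\rho\|\,\cdot\,)$ evaluated at the comparable operator $2^{-\mu}\omega$, and then to strip off the scalar $2^{-\mu}$ at the cost of exactly $\mu$.

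First I would establish the scaling identity $D_\alpha(\rho\|c\,\omega) = D_\alpha(\rho\|\omega) - \log c$ for every scalar $c>0$. For $\alpha \in [\tfrac12,1)$ this is a one-line computation from the definition in \eqref{eq:min_entropy}: pulling $c$ out of each factor $(c\omega)^{\frac{1-\alpha}{2\alpha}}$ produces a constant $c^{\frac{1-\alpha}{\alpha}}$ inside the trace, which becomes $c^{1-\alpha}$ after raising to the power $\alpha$, whence the prefactor $\frac{1}{\alpha-1}$ turns the resulting $\log c^{1-\alpha}$ summand into $-\log c$. The endpoints $\alpha = \tfrac12$ and $\alpha = 1$ follow either by the analogous identities $F(\rho,c\omega) = c\,F(\rho,\omega)$ and $D(\rho\|c\omega) = D(\rho\|\omega) - \log c$ (using $\tr \rho = 1$), or by continuity in $\alpha$.

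Next I would invoke the antitonicity of $D_\alpha$ in its second argument for $\alpha \in [\tfrac12,\infty)$, namely that $\sigma \ge \tau$ implies $D_\alpha(\rho\|\sigma) \le D_\alpha(\rho\|\tau)$; this is a standard property of the sandwiched R\'enyi divergences~\cite{MLDSFT13,marco_book}. Applying it to $\sigma \ge 2^{-\mu}\omega$ gives $D_\alpha(\rho\|\sigma) \le D_\alpha(\rho\| 2^{-\mu}\omega)$, and combining with the scaling identity (with $c = 2^{-\mu}$) yields $D_\alpha(\rho\|\sigma) \le D_\alpha(\rho\|\omega) + \mu$, which is exactly \eqref{eq_triangleD}.

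The main obstacle is the monotonicity step: unlike the scaling identity, antitonicity in the second argument does not drop out of the trace formula, since the map $X \mapsto \bigl(X^{\frac{1-\alpha}{2\alpha}}\rho\, X^{\frac{1-\alpha}{2\alpha}}\bigr)^{\alpha}$ is not manifestly operator monotone and squaring-type obstructions block a naive operator-inequality argument. I would therefore rely on the known integral/operator-monotone representation of $D_\alpha$ rather than reprove it, and take care with the separable infinite-dimensional setting — in particular, ensuring that the support condition $\rho \ll \sigma$ needed for finiteness is inherited from $\rho \ll \omega$ together with $\omega \ll \sigma$, and that the infimum defining $D_{\max}$ is genuinely attained so that the operator inequality $\omega \le 2^{\mu}\sigma$ may be used. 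A safe alternative that sidesteps the attainment question is to run the argument for every $\lambda > \mu$ (where $\omega \le 2^{\lambda}\sigma$ holds) and then let $\lambda \downarrow \mu$.
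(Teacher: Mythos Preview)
Your proposal is correct and follows essentially the same route as the paper: both exploit the operator inequality $\omega \le 2^{D_{\max}(\omega\|\sigma)}\sigma$ together with antitonicity of $D_\alpha(\rho\|\cdot)$ in the second argument (plus the scaling identity), followed by continuity at $\alpha=1$. The only difference is that the paper supplies a short self-contained proof of the antitonicity step for $\alpha\in[\tfrac12,1)$---by rewriting the trace as $\tr\bigl(\rho^{1/2}\sigma^{\frac{1-\alpha}{\alpha}}\rho^{1/2}\bigr)^\alpha$ and invoking operator monotonicity of $t\mapsto t^{\frac{1-\alpha}{\alpha}}$ together with monotonicity of $X\mapsto\tr X^\alpha$---whereas you cite it as a known fact.
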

\begin{proof}
For $\alpha \in [\frac{1}{2},1)$, the function $t \mapsto t^{\frac{1-\alpha}{\alpha}}$ is operator monotone on $[0,\infty)$~\cite[Theorem~V.1.9]{bhatia_book}. Furthermore, the function $X \mapsto \tr X^{\alpha}$ is monotone on the set of Hermitian operators on a separable Hilbert space, since the function $X\mapsto X^\alpha$ is operator monotone~\cite{bhatia_book}. By definition of the max-relative entropy we find
\begin{align}
D_{\alpha}(\rho \| \sigma) 
= \frac{1}{\alpha-1} \log \tr \Big( \rho^{\frac{1}{2}} \sigma^{\frac{1-\alpha}{\alpha}} \rho^{\frac{1}{2}} \Big)^{\alpha} 
\leq D_{\alpha}(\rho \|\omega) + D_{\max}(\omega \| \sigma) \, . 
\end{align}
for $\alpha < 1$. The case $\alpha=1$ then follows by continuity.
\end{proof}
\begin{remark}
We note that if $A$ is a finite-dimensional Hilbert space then~\eqref{eq_triangleD}  is valid for all $\alpha   \in [\frac{1}{2},\infty)$. This follows from the fact that $t \mapsto t^{\frac{1-\alpha}{\alpha}}$ is operator anti-monotone~\cite{marco_book} for $\alpha > 1$ and that the function $X \mapsto \tr X^{\alpha}$ is monotone on the set of Hermitian operators~\cite[Theorem~2.10]{carlen_book}.
\end{remark}

Very recently, a similar triangle-like inequality for R\'enyi relative entropies that additionally involves trace-preserving completely positive maps has been established in~\cite{Christandl2017}. 
The following remarks show that Lemma~\ref{lem_triangleD} is optimal and that there is not much flexibility to prove triangle-like inequalities for the relative entropy different than~\eqref{eq_triangleD}.
\begin{remark} \label{rmk_triangle}
Lemma~\ref{lem_triangleD} is optimal in the sense that~\eqref{eq_triangleD} is no longer valid if $D_{\max}$ is replaced with $D_{\alpha}$ for any $\alpha  \in [\frac{1}{2}, 1]$. To see this, let $p \in (0,1)$ and consider  three classical distributions on $\{0,1\} \times \{0,1\}$ defined by
\begin{align}
P_{XY}(x,y) := \left \lbrace \begin{array}{l l}
p & \text{if } x=y=0 \\
\frac{1-p}{3} & \text{otherwise}
\end{array} \right., \!
  \quad Q_{XY}(x,y):= \left \lbrace \begin{array}{l l}
p & \text{if } x=y=1 \\
\frac{1-p}{3} & \text{otherwise}
\end{array} \right. , \! \quad S_{XY}(x,y)=\frac{1}{4}  \, .
\end{align}
A simple calculation shows that 
\begin{align}
&D(P \| Q) = \frac{4p-1}{3} \log \frac{3p}{1-p} \\
&D(P \| S) = p \log 4p + (1-p) \log \frac{4 (1-p)}{3} \\
&D_{\alpha}(S \| Q)= \frac{1}{\alpha -1} \log \left( \frac{3^{\alpha }}{4^{\alpha} (1-p)^{\alpha -1}} + \frac{1}{4^{\alpha} p^{\alpha-1}} \right) \, .
\end{align}
Choosing $p=1-2^{-\alpha}$ reveals that 
\begin{align} \label{eq_strictt}
D(P \| Q) > D(P \| S) + D_{\alpha}(S \| Q) \quad \text{for all} \quad  \alpha \in [\tfrac{1}{2},\infty) \, .
\end{align}
In the limit $\alpha \to \infty$ the strict inequality \eqref{eq_strictt} becomes an equality.
\end{remark}

\begin{remark}
The statement of Lemma~\ref{lem_triangleD} is no longer true if the max-relative entropy and the relative entropy on the right-hand side of~\eqref{eq_triangleD} are exchanged.
To see this consider the three classical binary probability distributions 
\begin{align}
P(x) := \left \lbrace \begin{array}{l l} 1-p & \text{if } x=0\\
p & \text{otherwise}
\end{array} \right.\!\! , \! \quad Q(x) := \left \lbrace \begin{array}{l l} 1-\eps & \text{if } x=0\\
\eps & \text{otherwise}
\end{array} \right. \!\! , \! \quad S(x) := \left \lbrace \begin{array}{l l} 1-\frac{p}{2} & \text{if } x=0\\
\frac{p}{2} & \text{otherwise}
\end{array} \right. \!, 
\end{align}
with $p,\eps \in (0,1)$. This gives
\begin{align}
&D(P \| Q) =  (1-p) \log \frac{1-p}{1-\eps} + p \log \frac{p}{\eps}\\
&D(S \| Q)= \frac{2-p}{2} \log \frac{2-p}{2-2\eps} + \frac{p}{2} \log \frac{p}{2 \eps} \\
&D_{\max}(P \| S) = \max \left \lbrace\log \frac{2(1-p)}{2-p}, \log 2 \right \rbrace=1 \, .
\end{align}
For $p=\frac{7}{8}$ and $\eps = \frac{1}{8}$ we find that
\begin{align}
D(P\| Q) > D_{\max}(P \| S) + D(S \| Q) \, .
\end{align}
This shows that it is crucial which term in Lemma~\ref{lem_triangleD} carries a max-relative entropy.
\end{remark}

\begin{remark}
The relative entropy satisfies a triangle-like inequality different from Lemma~\ref{lem_triangleD}. For the \emph{log-Euclidean $\alpha$-R\'enyi divergence} $D^\flat_{\alpha}(\omega \| \sigma):=\frac{1}{1-\alpha} \log \tr \, \ee^{\alpha \log \rho +(1-\alpha)\log \sigma}$ it is known~\cite{milan14} that 
\begin{align}
D(\rho \| \sigma) \leq \frac{\alpha}{\alpha -1} D(\rho \| \omega) + D^\flat_{\alpha}(\omega \| \sigma) \quad \text{for} \quad \alpha \in (1,\infty) \, .
\end{align}
We also note that $D_{\infty}^\flat(\omega \| \sigma) \leq D_{\max}(\omega \| \sigma)$ which shows that in the limit $\alpha \to \infty$ we obtain Lemma~\ref{lem_triangleD} for the case $\alpha =1$.
\end{remark}

\section{Main result and proof}
\begin{theorem} \label{thm_main}
Let $A$, $B$, and $C$ be separable Hilbert spaces, let $\rho_{ABC}\in\St(A\otimes B \otimes C)$, and let $\cR_{B \to BC}$ be a trace-preserving completely positive map from $B$ to $B\otimes C$. Then
\begin{align} \label{eq_main}
D\big(\rho_{ABC} \| \cR_{B\to BC}(\rho_{AB}) \big)+  \Lambda_{\max}(\rho_{AB}\| \cR_{B \to B})  \geq I(A:C|B)_{\rho} \, .
\end{align}
\end{theorem}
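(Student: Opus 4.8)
The plan is to route the comparison through an intermediate state that is an exact Markov chain. The key observation is that for every density operator $\tau \in \mathrm{Inv}(\cR_{B \to B})$ the recovery map is ``read only'' on $B$ for $\tau$, so the argument leading to~\eqref{eq_whyMarkov} shows that $\sigma_{ABC} := \cR_{B \to BC}(\tau)$ is an exact Markov chain with $I(A:C|B)_\sigma = 0$. I will use such a $\sigma_{ABC}$ as a pivot: first I lower bound $D(\rho_{ABC}\|\sigma_{ABC})$ by $I(A:C|B)_\rho$, and then I relate $D(\rho_{ABC}\|\sigma_{ABC})$ back to $D(\rho_{ABC}\|\cR_{B\to BC}(\rho_{AB}))$ at a max-relative-entropy cost via Lemma~\ref{lem_triangleD}.

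For the first step I would establish that $D(\rho_{ABC}\|\sigma_{ABC}) \geq I(A:C|B)_\rho$ holds for \emph{every} Markov chain $\sigma_{ABC}$ in order $A\leftrightarrow B\leftrightarrow C$. Using the decomposition~\eqref{eq_MarkovDec} one verifies the operator identity $\log\sigma_{ABC} = \log\sigma_{AB} + \log\sigma_{BC} - \log\sigma_B$ on the support of $\sigma_{ABC}$. Substituting this into $D(\rho_{ABC}\|\sigma_{ABC}) = \tr\rho_{ABC}\log\rho_{ABC} - \tr\rho_{ABC}\log\sigma_{ABC}$ and regrouping terms against the definition of $I(A:C|B)_\rho$ yields
\begin{align}
D(\rho_{ABC}\|\sigma_{ABC}) = I(A:C|B)_\rho + D(\rho_{AB}\|\sigma_{AB}) - D(\rho_B\|\sigma_B) + D(\rho_{BC}\|\sigma_{BC}) \, .
\end{align}
The difference $D(\rho_{AB}\|\sigma_{AB}) - D(\rho_B\|\sigma_B)$ is nonnegative by the data-processing inequality applied to $\tr_A$, and $D(\rho_{BC}\|\sigma_{BC}) \geq 0$, so $D(\rho_{ABC}\|\sigma_{ABC}) \geq I(A:C|B)_\rho$. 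If $\supp\rho_{ABC} \not\subseteq \supp\sigma_{ABC}$ the left-hand side is $+\infty$ and the bound is trivial, so one may assume the relevant supports are nested.

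For the second step I apply Lemma~\ref{lem_triangleD} with $\alpha = 1$ to the triple $\rho := \rho_{ABC}$, $\omega := \cR_{B\to BC}(\rho_{AB})$, and $\sigma := \sigma_{ABC} = \cR_{B\to BC}(\tau)$, giving
\begin{align}
D(\rho_{ABC}\|\sigma_{ABC}) \leq D\big(\rho_{ABC}\|\cR_{B\to BC}(\rho_{AB})\big) + D_{\max}\big(\cR_{B\to BC}(\rho_{AB}) \,\|\, \cR_{B\to BC}(\tau)\big) \, .
\end{align}
Because the max-relative entropy does not increase when the same completely positive map is applied to both arguments, $D_{\max}(\cR_{B\to BC}(\rho_{AB})\|\cR_{B\to BC}(\tau)) \leq D_{\max}(\rho_{AB}\|\tau)$. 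Combining with the first step produces $I(A:C|B)_\rho \leq D(\rho_{ABC}\|\cR_{B\to BC}(\rho_{AB})) + D_{\max}(\rho_{AB}\|\tau)$ for every $\tau \in \mathrm{Inv}(\cR_{B\to B})$; taking the infimum over such $\tau$ turns the last term into $\Lambda_{\max}(\rho_{AB}\|\cR_{B\to B})$ and yields~\eqref{eq_main}. If $\mathrm{Inv}(\cR_{B\to B})$ is empty or no invariant $\tau$ satisfies $\rho_{AB}\ll\tau$, then $\Lambda_{\max} = +\infty$ and the statement is vacuous.

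The conceptual crux is the pivot construction—passing to the Markov chain $\cR_{B\to BC}(\tau)$ and paying $D_{\max}(\rho_{AB}\|\tau)$ through Lemma~\ref{lem_triangleD}—while the computation in the first step is routine in finite dimensions. I therefore expect the main obstacle to be the separable-Hilbert-space setting: justifying the operator logarithm identity for Markov chains and the term-by-term regrouping demands care with supports and trace-class conditions so that $I(A:C|B)_\rho$ and each relative entropy is well defined. Once this finiteness bookkeeping is handled (reducing problematic cases to trivially valid $+\infty$ bounds), the remaining ingredients—the triangle-like Lemma~\ref{lem_triangleD}, data processing for $D$ and $D_{\max}$, and the Markov property~\eqref{eq_whyMarkov} of the image of an invariant state—combine directly.
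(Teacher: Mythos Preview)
Your proposal is correct and follows essentially the same route as the paper: the paper packages your first step as Lemma~\ref{lem_winter} (the bound $I(A:C|B)_\rho \leq D(\rho_{ABC}\|\mu_{ABC})$ for any Markov chain $\mu$) and your second step together with the data-processing bound $D_{\max}(\cR(\rho_{AB})\|\cR(\tau)) \leq D_{\max}(\rho_{AB}\|\tau)$ and the implication $\tau\in\mathrm{Inv}(\cR_{B\to B})\Rightarrow\cR_{B\to BC}(\tau)\in\QMC$ as Lemma~\ref{lem_conn}, then combines them via Lemma~\ref{lem_triangleD} exactly as you do. You have also correctly located the only real technical hurdle: the separable-space extension, which the paper handles by first restricting to finite-dimensional $A$ (so that the conditional-entropy argument~\eqref{eq_whyMarkov} and the decomposition~\eqref{eq_MarkovDec} are unproblematic) and then passing to the limit via finite-rank projectors and a Banach--Alaoglu compactness argument.
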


The quantity $\Lambda_{\max}(\rho_{AB}\| \cR_{B \to B})$ is defined in~\eqref{eq_Lambda} and $\cR_{B \to B} := \tr_C \circ \cR_{B \to BC}$. To prove the assertion of Theorem~\ref{thm_main} we make use of a known lemma stating that the conditional mutual information of a tripartite density operator is bounded from above by the smallest relative entropy distance to Markov chains. Let $\QMC(A\otimes B \otimes C)$ denote the set of Markov chains on $A\otimes B \otimes C$, i.e., tripartite density operators $\rho_{ABC} \in \St(A\otimes B \otimes C)$ that satisfy~\eqref{eq_Markov}.
\begin{lemma}[{\cite[Theorem~4]{ILW08}}] \label{lem_winter}
Let $\rho_{ABC} \in \St(A\otimes B \otimes C)$. Then
\begin{align} \label{eq_winter}
I(A:C|B)_{\rho} \leq \inf_{\mu  \in \QMC}D(\rho_{ABC} \| \mu_{ABC}) \, .
\end{align}
\end{lemma}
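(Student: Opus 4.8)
The plan is to establish the bound separately for each Markov chain $\mu_{ABC}\in\QMC$ and then take the infimum. Fix a Markov chain $\mu_{ABC}$; if $D(\rho_{ABC}\|\mu_{ABC})=\infty$ there is nothing to prove, so I would assume it is finite, which forces $\supp\rho_{ABC}\subseteq\supp\mu_{ABC}$ and hence the analogous support inclusions for all marginals. The crucial structural input is that Markov chains are log-linear: using the decomposition~\eqref{eq_MarkovDec}, a direct block-wise computation shows that on its support
\begin{align}
\log\mu_{ABC} = \log\mu_{AB} + \log\mu_{BC} - \log\mu_B \, ,
\end{align}
where each term is understood as tensored with the appropriate identity. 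Indeed, writing $\mu_{ABC}=\bigoplus_j P(j)\,\mu_{Ab_j^L}\otimes\mu_{b_j^R C}$ and taking logarithms block by block, the $\log P(j)$, $\log\mu_{Ab_j^L}$ and $\log\mu_{b_j^R C}$ contributions reassemble exactly into the claimed right-hand side.

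With this identity in hand I would substitute it into $D(\rho_{ABC}\|\mu_{ABC})=\tr\rho_{ABC}\log\rho_{ABC}-\tr\rho_{ABC}\log\mu_{ABC}$ and compare with the definition of $I(A:C|B)_\rho$. Collecting terms and reducing each trace against $\rho_{ABC}$ to the relevant marginal yields the exact identity
\begin{align}
D(\rho_{ABC}\|\mu_{ABC}) - I(A:C|B)_\rho = D(\rho_{AB}\|\mu_{AB}) + D(\rho_{BC}\|\mu_{BC}) - D(\rho_B\|\mu_B) \, .
\end{align}
It then remains to show the right-hand side is nonnegative. This follows immediately from two standard facts: the relative entropy is nonnegative, so $D(\rho_{BC}\|\mu_{BC})\geq 0$; and it is monotone under the partial trace $\tr_C$ over the first factor~\cite{lindblad75,uhlmann77}, so $D(\rho_{AB}\|\mu_{AB})\geq D(\rho_B\|\mu_B)$. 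Hence $D(\rho_{ABC}\|\mu_{ABC})\geq I(A:C|B)_\rho$, and taking the infimum over $\mu_{ABC}\in\QMC$ completes the argument.

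The main obstacle is not the algebra but making the manipulations rigorous on separable, possibly infinite-dimensional Hilbert spaces, where splitting a trace $\tr\rho_{ABC}(X+Y)$ into $\tr\rho_{ABC}X+\tr\rho_{ABC}Y$ requires each piece to be separately well defined. I would handle this exactly as in the conditional-mutual-information discussion of the introduction: the finiteness of $D(\rho_{ABC}\|\mu_{ABC})$ together with the support inclusions guarantees that each of $\rho_{AB}(\log\rho_{AB}-\log\mu_{AB})$, $\rho_{BC}(\log\rho_{BC}-\log\mu_{BC})$ and $\rho_B(\log\rho_B-\log\mu_B)$ is trace class, so the decomposition of the relative entropy into marginal relative entropies is legitimate and the regrouping above is valid. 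In finite dimensions this is automatic and the proof reduces to the one-line computation sketched above.
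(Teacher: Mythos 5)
Your proposal is correct and follows essentially the same route as the paper's proof (due to Jen\u{c}ov\'a): the log-linearity identity $\log\mu_{ABC}+\log\mu_B=\log\mu_{AB}+\log\mu_{BC}$ for Markov chains, the exact rewriting of $D(\rho_{ABC}\|\mu_{ABC})-I(A{:}C|B)_\rho$ as $D(\rho_{AB}\|\mu_{AB})+D(\rho_{BC}\|\mu_{BC})-D(\rho_B\|\mu_B)$, and nonnegativity via data processing, with the same finiteness/trace-class bookkeeping for separable spaces. Only a notational slip: the monotonicity giving $D(\rho_{AB}\|\mu_{AB})\geq D(\rho_B\|\mu_B)$ is under $\tr_A$ (the partial trace over the first factor, as you say in words), not $\tr_C$.
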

\begin{proof}
The proof we provide here follows the lines of a proof by Jen\u{c}ov\'a  (see the short note after the acknowledgements in~\cite{ILW08}), but extends it to general separable spaces. 

Let $\mu_{A B C} \in \QMC$ and assume without loss of generality that the relative entropy $D(\rho_{A B C} \| \mu_{A B C})$ is finite. (If there is no such state then the infimum in~\eqref{eq_winter} equals infinity and the statement is trivial.) Due to the data processing inequality~\cite{lindblad75,uhlmann77} we have
\begin{align}
  0 \leq D(\rho_{B} \| \mu_{B}) \leq D(\rho_{AB} \| \mu_{AB}) \leq D(\rho_{A B C} \| \mu_{A B C}) \label{eq_dataprocessingp} 
\end{align}
and 
\begin{align}
  0 \leq D(\rho_{B} \| \mu_{B}) \leq D(\rho_{BC} \| \mu_{BC}) \leq D(\rho_{A B C} \| \mu_{A B C}) \label{eq_dataprocessingpp} \ .
\end{align}
In particular, the relative entropies $D(\rho_{AB} \| \mu_{AB})$, $D(\rho_{BC} \| \mu_{BC})$, and $D(\rho_{B} \| \mu_{B})$ are finite.
We thus have
\begin{multline*}
  D(\rho_{ABC} \| \mu_{ABC}) + D(\rho_{B} \| \mu_{B}) - D(\rho_{AB} \| \mu_{AB}) - D(\rho_{BC} \| \mu_{BC}) \\
  = \tr\Bigl( \rho_{A B C} \bigl( \log \rho_{A B C} - \log \mu_{A B C} + \log \rho_B - \log \mu_B - \log \rho_{A B} + \log \mu_{A B} - \log \rho_{B C} + \log \mu_{B C} \bigr) \Bigr) \ .
\end{multline*}
Using the Markov chain property~\eqref{eq_MarkovDec} for $\mu_{A B C}$, i.e., 
\begin{align}
\mu_{ABC} = \bigoplus_j P(j) \mu_{Ab_j^L} \otimes \mu_{b_j^R C} \quad \text{for} \quad B  = \bigoplus_{j} b_j^L \otimes b_j^R \, ,
\end{align}
it is straightforward to verify that
\begin{align}
   \log \mu_{ABC} + \log \mu_B -  \log \mu_{AB} - \log \mu_{BC}  = 0 \ .
\end{align}
The above can thus be simplified to
\begin{multline*}
  D(\rho_{ABC} \| \mu_{ABC}) + D(\rho_{B} \| \mu_{B}) - D(\rho_{AB} \| \mu_{AB}) - D(\rho_{BC} \| \mu_{BC}) \\
   = \tr\Bigl( \rho_{A B C} \bigl( \log \rho_{A B C}  + \log \rho_B  - \log \rho_{A B}  - \log \rho_{B C}  \bigr) \Bigr) = I(A : C | B) \ .
\end{multline*}
It follows from~\eqref{eq_dataprocessingp} and~\eqref{eq_dataprocessingpp} that
\begin{align}
    D(\rho_{ABC} \| \mu_{ABC}) 
   \geq  I(A : C | B) \ ,
\end{align}
which concludes the proof. 
\end{proof}

In order to prove Theorem~\ref{thm_main} we need one more lemma that relates the distance to Markov chains and the $\Lambda_{\max}$-quantity defined in~\eqref{eq_Lambda}.
 \begin{lemma} \label{lem_conn}
   Let $\rho_{AB} \in \Pos(A\otimes B)$ and $\cR_{B \to B C}$ be a trace-preserving completely positive map. Then
\begin{align} \label{eq_conn}
     \inf_{\mu \in \QMC} D_{\max}\bigl( \cR_{B \to B C}(\rho_{A B}) \| \mu_{A B C} \bigr) \leq \Lambda_{\max}( \rho_{A B} \|  \cR_{B \to B}) \, .
\end{align}
 \end{lemma}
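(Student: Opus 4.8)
The plan is to reduce the left-hand infimum over Markov chains to the right-hand infimum over invariant states by exhibiting, for each invariant state, a single Markov chain that already witnesses a small enough $D_{\max}$. Concretely, I would fix an arbitrary density operator $\tau_{AB} \in \mathrm{Inv}(\cR_{B\to B})$, so that $\cR_{B\to B}(\tau_{AB}) = \tau_{AB}$, and set $\mu_{ABC} := \cR_{B\to BC}(\tau_{AB})$. Since $\cR_{B\to BC}$ is trace preserving, $\mu_{ABC}$ is a density operator on $A\otimes B\otimes C$. I will then show that (i) $\mu_{ABC}$ is a Markov chain and (ii) $D_{\max}\big(\cR_{B\to BC}(\rho_{AB})\,\|\,\mu_{ABC}\big) \le D_{\max}(\rho_{AB}\|\tau_{AB})$; taking infima over both sides yields~\eqref{eq_conn}. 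If no invariant state $\tau_{AB}$ satisfies $\rho_{AB}\ll\tau_{AB}$ then the right-hand side of~\eqref{eq_conn} is $+\infty$ and there is nothing to prove, so I may assume $\Lambda_{\max}(\rho_{AB}\|\cR_{B\to B})<\infty$.

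For step (i), the key observation is that the invariance of $\tau_{AB}$ makes $\cR_{B\to BC}$ act as a ``read only'' map on $B$ for the input $\tau_{AB}$: indeed $\mu_{AB} = \tr_C\,\mu_{ABC} = \cR_{B\to B}(\tau_{AB}) = \tau_{AB}$. This is exactly the situation analysed in~\eqref{eq_whyMarkov}, with $\tau_{AB}$ playing the role of $\rho_{AB}$. The data-processing inequality applied to $\cR_{B\to BC}$ and to $\tr_C$ gives
\begin{align}
H(A|B)_{\tau}\le H(A|BC)_{\mu}\le H(A|B)_{\mu}=H(A|B)_{\tau} \, ,
\end{align}
so that $I(A:C|B)_{\mu}=H(A|B)_{\mu}-H(A|BC)_{\mu}=0$ and hence $\mu_{ABC}\in\QMC$. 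Step (ii) is immediate from the definition of the max-relative entropy: applying the same positive map $\cR_{B\to BC}$ to both arguments cannot increase $D_{\max}$, so $D_{\max}\big(\cR_{B\to BC}(\rho_{AB})\,\|\,\cR_{B\to BC}(\tau_{AB})\big)\le D_{\max}(\rho_{AB}\|\tau_{AB})$, and by construction $\cR_{B\to BC}(\tau_{AB})=\mu_{ABC}$.

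Combining (i) and (ii),
\begin{align}
\inf_{\mu\in\QMC} D_{\max}\big(\cR_{B\to BC}(\rho_{AB})\,\|\,\mu_{ABC}\big) \le D_{\max}\big(\cR_{B\to BC}(\rho_{AB})\,\|\,\mu_{ABC}\big)\le D_{\max}(\rho_{AB}\|\tau_{AB}) \, ,
\end{align}
where the first inequality uses that $\mu_{ABC}$ is an admissible element of the infimum over $\QMC$. Since $\tau_{AB}\in\mathrm{Inv}(\cR_{B\to B})$ was arbitrary, taking the infimum over all such $\tau_{AB}$ on the right-hand side produces $\Lambda_{\max}(\rho_{AB}\|\cR_{B\to B})$ by the definition~\eqref{eq_Lambda}, which finishes the proof.

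The part that will require the most care is step (i) in the separable, infinite-dimensional setting, where the conditional entropies appearing above need not be finite and the characterisation ``$I(A:C|B)=0 \iff$ Markov chain'' must be invoked with the appropriate finiteness caveats. I expect this to be handled exactly as in the discussion around~\eqref{eq_whyMarkov}: one restricts to the regime where $H(A|B)_{\tau}$ is finite (which is the only case in which the relevant relative entropies do not degenerate) or argues the Markov property directly from the invariance $\mu_{AB}=\tau_{AB}$ together with the saturation of data processing. Everything else---the monotonicity of $D_{\max}$ under a common positive map and the manipulation of the two infima---is entirely routine.
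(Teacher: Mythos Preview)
Your strategy is exactly the paper's: for any $\tau_{AB}\in\mathrm{Inv}(\cR_{B\to B})$ set $\mu_{ABC}:=\cR_{B\to BC}(\tau_{AB})$, show $\mu_{ABC}\in\QMC$, and use monotonicity of $D_{\max}$ under the common map $\cR_{B\to BC}$. The paper proves $\mu_{ABC}\in\QMC$ via the same conditional-entropy chain you wrote down, which requires $H(A|B)_\tau$ to be finite; it therefore first assumes $\dim A<\infty$ and then extends to separable $A$ by a fairly lengthy argument with finite-rank projectors $\Pi^k_A$, monotonicity of $D_{\max}$ under these truncations, and Banach--Alaoglu compactness in the weak$^*$ topology to extract a limiting Markov chain.

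Your closing remark, however, already contains a shortcut that makes all of that unnecessary and that the paper does not take. Since $\mu_{AB}=\tr_C\,\cR_{B\to BC}(\tau_{AB})=\cR_{B\to B}(\tau_{AB})=\tau_{AB}$, one has literally $\mu_{ABC}=\cR_{B\to BC}(\mu_{AB})$, which is the defining relation~\eqref{eq_Markov} of a Markov chain with $\cR_{B\to BC}$ itself as the witnessing recovery map. No conditional entropies, no finiteness hypotheses, and hence no separate treatment of the infinite-dimensional case are needed. So the ``part that will require the most care'' in your last paragraph in fact requires none; you should simply replace the entropy argument in step~(i) by this one-line observation.
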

\begin{proof}
For the proof, we first assume that the system $A$ has a finite dimension, so that conditional entropies of the form $H(A|B)$ are finite. The data processing inequality for the max-relative entropy~\cite{datta09,frankLieb13,marco_book} implies that
  \begin{align}
    &\inf_{\mu_{ABC}} \{ D_{\max}\bigl( \cR_{B \to BC}(\rho_{A B}) \| \mu_{A B C} \bigr) : \,  \mu_{ABC}\in \QMC \}  \nonumber \\
    &\hspace{3mm} \leq   \inf_{\tau_{AB}}\{ D_{\max}\bigl( \cR_{B \to BC}(\rho_{A B}) \| \cR_{B \to B C}(\tau_{A B}) \bigr): \cR_{B\to BC}(\tau_{AB}) \in \QMC, \tau_{AB} \in \St(A\otimes B) \}  \\
    &\hspace{3mm} \leq \inf_{\tau_{AB}} \{ D_{\max}(\rho_{A B} \|\tau_{A B} ) : \cR_{B\to BC}(\tau_{AB}) \in \QMC, \tau_{AB} \in \St(A\otimes B) \}  \, . \label{eq_Dmaxrhomu}
  \end{align}
  Furthermore, because the data processing inequality for the conditional entropy~\cite{LieRus73_1,LieRus73} implies that $H(A|BC)_{\cR_{B \to BC}(\tau_{A B})} \geq H(A|B)_{\tau_{A B}}$ for any $\tau_{AB} \in \St(A\otimes B)$, we also have
  \begin{align}
    \tau_{A B} \in \mathrm{Inv}(\cR_{B \to B})
    & \quad \implies \quad
    H(A|BC)_{\mu} \geq H(A|B)_{\mu}  \quad \text{for $\mu_{A B C} = \cR_{B \to B C}(\tau_{A B})$} \ .
  \end{align}
  Note that the inequality on the right hand side of the implication must, again by the data processing inequality, be an equality, which means that $I(A : C | B)_{\mu} = 0$ and, hence, that $\mu \in \QMC$. This proves the general implication
  \begin{align}
      \tau_{A B} \in \mathrm{Inv}(\cR_{B \to B})
      \quad \implies \quad
      \cR_{B \to B C}(\tau_{A B}) \in \QMC \, .
  \end{align}
  We now use it to obtain
  \begin{align}
    \Lambda_{\max}( \rho_{A B} \| \cR_{B \to B})
    & = 
     \inf_{\tau_{AB}} \{ D_{\max}(\rho_{A B} \|\tau_{A B} ) :\, \tau_{AB} \in \mathrm{Inv}(\cR_{B \to B}) \} \\
     & \geq 
     \inf_{\tau_{AB}} \{ D_{\max}(\rho_{A B} \| \tau_{A B})  : \, \cR_{B \to B C}(\tau_{AB}) \in \QMC, \tau_{AB} \in \St(A\otimes B) \} \, .
  \end{align}
  Combining this with~\eqref{eq_Dmaxrhomu} completes the proof for the case where the system $A$ is finite-dimensional. 
  
  To extend the claim to general separable Hilbert spaces, consider a sequence of finite-rank projectors $(\Pi^{k}_A)_{k \in \mathbb{N}}$ on $A$ with $\Pi^{k}_A \leq \Pi^{k+1}_A$ for any $k \in \mathbb{N}$ that, for $k \to \infty$, converges to the identity in the weak, and hence also in the strong, operator topology~\cite{FAR11}. It follows from the monotonicity of the max-relative entropy under positive maps and~\eqref{eq_conn} for finite-dimensional~$A$ that  
  \begin{align}
     \inf_{\mu \in \QMC} D_{\max}\bigl( \Pi^k_A \cR_{B \to B C}(\rho_{A B}) \Pi^k_A  \| \Pi^k_A \mu_{A B C} \Pi^k_A \bigr) 
     & \leq      \inf_{\mu \in \QMC} D_{\max}\bigl( \Pi^k_A \cR_{B \to B C}(\rho_{A B}) \Pi^k_A  \|  \mu_{A B C}  \bigr) \\
     & \leq \Lambda_{\max}( \Pi^k_A \rho_{A B} \Pi^k_A \|  \cR_{B \to B} ) \, .
 \end{align}
The right hand side can be bounded for any $k\in \N$ by 
\begin{align}
  \Lambda_{\max}(\Pi^k_A \rho_{A B} \Pi^k_A \| \cR_{B \to B} ) 
  & = \inf_{\tau_{A B} \in \mathrm{Inv}(\cR_{B \to B})} D_{\max}(\Pi^k_A \rho_{A B} \Pi^k_A \| \tau_{A B})  \\
  &\leq \inf_{\tau_{A B} \in \mathrm{Inv}(\cR_{B \to B})} D_{\max}\Big(\Pi^k_A \rho_{A B} \Pi^k_A \| \frac{ \Pi_A^k \tau_{A B} \Pi_A^k}{\tr\,  \Pi_A^k \tau_{A B} \Pi_A^k}\Big)\\
   &=  \inf_{\tau_{A B} \in \mathrm{Inv}(\cR_{B \to B})} \left\{ D_{\max}(\Pi^k_A \rho_{A B} \Pi^k_A \| \Pi^k_A  \tau_{A B} \Pi^k_A  ) + \log \tr \,  \Pi_A^k \tau_{A B} \Pi_A^k  \right\} \\
   & \leq \inf_{\tau_{A B} \in \mathrm{Inv}(\cR_{B \to B})} D_{\max}(\Pi^k_A \rho_{A B} \Pi^k_A \| \Pi^k_A  \tau_{A B} \Pi^k_A  ) \, ,
\end{align}
where the first inequality uses that $ \Pi_A^k \tau_{A B} \Pi_A^k / \tr\,  \Pi_A^k \tau_{A B} \Pi_A^k \in \mathrm{Inv}(\cR_{B \to B})$. The final step follows because $\tau_{AB}$ is a density operator and hence $\tr \,  \Pi_A^k \tau_{A B} \leq 1$ for any projector $\Pi_A^k$ on $A$. Using once again the monotonicity of the max-relative entropy under positive maps we find with the above
\begin{align}
  \Lambda_{\max}(\Pi^k_A \rho_{A B} \Pi^k_A \| \cR_{B \to B} ) 
& \leq  \inf_{\tau_{A B} \in \mathrm{Inv}(\cR_{B \to B})} D_{\max}( \rho_{A B} \| \tau_{A B}) \\
& = \Lambda_{\max}(\rho_{A B} \| \cR_{B \to B}) \ .
\end{align}

To conclude the proof, it thus suffices to establish that
\begin{align} \label{eq_leftcontinuity}
       \inf_{\mu \in \QMC} D_{\max}\bigl( \cR_{B \to B C}(\rho_{A B}) \| \mu_{A B C} \bigr)
  \leq \lambda :=
      \limsup_{k \to \infty} \inf_{\mu \in \QMC} D_{\max}\bigl( \Pi^k_A \cR_{B \to B C}(\rho_{A B}) \Pi^k_A  \| \Pi^k_A \mu_{A B C} \Pi^k_A \bigr) \ .
\end{align}

Because the max-relative entropy cannot increase if the same positive map is applied to both arguments, the max-relative entropy is non-decreasing for increasing~$k$, and the $\limsup$ may therefore be replaced by a $\lim$. Hence, there exists a sequence $(\mu^k)_{k \in \mathbb{N}}$  of density operators in $\QMC$ such that 
\begin{align}
  \lambda = \lim_{k \to \infty} D_{\max}\bigl( \Pi^k_A \cR_{B \to B C}(\rho_{A B}) \Pi^k_A  \| \Pi^k_A \mu^k_{A B C} \Pi^k_A \bigr) \ ,
\end{align}
and we can assume without loss of generality that $\Pi^k_A \mu^k_{A B C} \Pi^k_A = \mu^k_{A B C}$. From here we  proceed analogously to the proof of Lemma~11 in~\cite{FAR11}. In particular, we use that the space, $T(\cH)$, of trace-class operators on $\cH = A \otimes B \otimes C$ (equipped with the trace norm) is isometrically isomorphic to the dual of the space $K(\cH)$ of compact operators on $\cH$ (equipped with the operator norm), with the isomorphism $\tau \mapsto \psi_\tau$ given by $\psi_\tau(\kappa) = \tr\,\kappa \tau$, and that, by the Banach-Alaoglu theorem, the closed unit ball on $T(\cH)$ is therefore compact with respect to the weak$^*$ topology. This implies that there exists a subsequence $(\mu^k)_{k \in \Gamma \subset \mathbb{N}}$ that converges in the weak$^*$ topology to an element $\mu \in T(\cH)$, i.e.,
\begin{align} \label{eq_weakconv}
  \lim_{k \to \infty} \tr \, \kappa \mu^k  = \tr \, \kappa \mu \qquad (k \in \Gamma) 
\end{align}
 for al $\kappa \in K(\cH)$. Because, for any $k \in \mathbb{N}$,  $\mu^k$ is a density operator, $\mu$ is also a density operator. The convergence~\eqref{eq_weakconv} also implies
 \begin{align}
   \lim_{k \to \infty}  \tr \, \kappa \Pi^k_A ( 2^{\lambda} \mu^k_{A B C} - \cR_{B \to B C}(\rho_{A B}) ) \Pi^k_A 
    = \tr\, \kappa (2^{\lambda} \mu_{A B C} - \cR_{B \to B C}(\rho_{A B}) )  \qquad (k \in \Gamma) 
 \end{align}
 for any $\kappa \in K(\cH)$. By the definition of the max-relative entropy, the sequence on the left hand side must converge to a non-negative real for any $\kappa \geq 0$. This implies~\eqref{eq_leftcontinuity} .
\end{proof}

\begin{proof}[Proof of Theorem~\ref{thm_main}]
Let $\mu_{ABC}$ be a Markov chain. Combining Lemma~\ref{lem_winter} with Lemma~\ref{lem_triangleD} applied for $\alpha=1$, $\rho=\rho_{ABC}$, $\sigma =\mu_{ABC}$ and $\omega = \cR_{B \to BC}(\rho_{AB})$ gives
\begin{align} \label{eq_midStep}
I(A:C|B)_{\rho} \leq 
D\big(\rho_{ABC} \| \cR_{B\to BC}(\rho_{AB}) \big)  +  \inf_{\mu \in \QMC}D_{\max}\big(\cR_{B\to BC}(\rho_{AB}) \| \mu_{ABC}  \big) \, .
\end{align}
Lemma~\ref{lem_conn} then proves the assertion of Theorem~\ref{thm_main}.\footnote{We note that~\eqref{eq_midStep} is stronger than~\eqref{eq_main} and therefore may be of independent interest.}
\end{proof}

\section{On the tightness of the main result} \label{sec_example}
In this section we construct examples that show two things. First, there exist classical tripartite states with a large conditional mutual information that, however, can be recovered well. This shows the necessity of the $\Lambda_{\max}$-term in the main bound~\eqref{eq_main} --- even if the relative entropy was replaced by the largest possible relative entropy measure, i.e., the max-relative entropy. Furthermore, the violation of such a bound without the $\Lambda_{\max}$-term can be made arbitrarily large. Second, our example shows that~\eqref{eq_main} is no longer valid if the max-relative entropy in the definition of $\Lambda_{\max}(\rho_{AB}\| \cR_{B \to B})$ is replaced with any $\alpha$-R\'enyi relative entropy for any $\alpha \in [\frac{1}{2},\infty)$.

Both examples will be classical, i.e., we consider tripartite states of the form~\eqref{eq_classicalState}. Such states are special as the corresponding density operators of the states and all its marginals are simultaneously diagonalizable. As a result, we can use the classical notion of a distribution to describe such states.

\subsection{A large conditional mutual information does not imply bad recovery} \label{sec_constructionExample}
Let $\cX = \{1,2,\ldots,2^n\}$ for $n \in \N$, $p,q \in [0,1]$ such that $p+q \leq 1$, and consider  two independent random variables $E_Z$ and $E_Y$ on $\{0,1\}$ and $\{0,1,2\}$, respectively, such that $\PP(E_Z=0)=p+q$, $\PP(E_Y=0)=p$, and $\PP(E_Y=1)=q$. Let $X \sim \cU(\cX)$, where $\cU(\cX)$ denotes the uniform distribution on $\cX$ and define two random variables by
\begin{align}
Z:=\left \lbrace \begin{array}{l l}
X & \text{if} \quad E_Z=0 \\
U_Z & \text{otherwise} 
\end{array}
  \right.
  \qquad \text{and} \qquad
  Y:=\left \lbrace \begin{array}{l l}
X & \text{if} \quad E_Y=0 \\
Z & \text{if} \quad E_Y=1 \\
U_Y & \text{otherwise} \,  ,
\end{array}
  \right.
\end{align}
where $U_Y \sim \cU(\cX)$ and $U_Z \sim \cU(\cX)$ are independent. This defines a tripartite distribution $P_{XYZ}$. A simple calculation reveals that
\begin{align}
H(X|Y E_Y E_Z) &= p H(X|X E_Z) + q H(X|Z E_Z) + (1-p-q) H(X|U_Y E_Z) \\
&= q \big( (p+q) H(X|X) +(1-p-q) H(X|U_Z) \big) + (1-p-q) H(X) \\
&= n (1-p-q) (1+q) \, . \label{eq_ddss}
\end{align}
Similarly we find
\begin{align}
H(X|Y Z E_Y E_Z) &=  q(1-p-q) H(X|U_Z) +(1-p-q)(1-p-q) H(X|U_Y) \\
&= n(1-p-q) (1-p) \, .\label{eq_ddss2}
\end{align}
We thus obtain
\begin{align} \label{eq_exCMI}
I(X:Z|Y)_P &= H(X|Y) - H(X|YZ)  \\
& \geq  H(X|Y E_Y E_Z) - H(X|YZ E_Y E_Z) - I(X:E_Y E_Z|YZ) \\
& \geq n (1-p-q) (p+q) - \log 6 \, . \label{eq_CMI_comp}
\end{align}

We next define a recovery map $\cR_{Y \to Y'Z'}$ that creates a tuple of random variables $(Y',Z')$ out of $Y$. Let the recovery map be such that
\begin{align}
(Y',Z'):= (p^2+q+p q) (Y,Y) + \frac{1}{2}\big(1-p^2-q-p q\big) (Y,U) + \frac{1}{2}\big(1-p^2-q-p q\big) (U',Y) \, ,
\end{align}
where $U, U'$ are independent uniformly distributed on $\cX$. Let 
\begin{align}
Q_{XY'Z'}:=\cR_{Y \to Y'Z'}(P_{XY}) 
\end{align}
denote the distribution that is generated when applying the recovery map (described above) to $P_{XY}$. In the following we will assume that $n$ is sufficiently large.
It can be verified easily that $Q_{Y'Z'}=P_{YZ}$. 
Since $P_{XYZ}$ and $Q_{XY'Z'}$ are classical distributions we have $D_{\max}(P_{XYZ} \| Q_{XY'Z'}) = \max_{x,y,z} \log \frac{P_{XYZ}(x,y,z)}{Q_{XY'Z'}(x,y,z)}$. We note that $\PP(X=Y)=p+pq+q^2$ according to the distribution $P_{XY}$ and hence
\begin{align}
&D_{\max}(P_{XYZ}\| Q_{XY'Z'}) \nonumber \\
&= \max \left \lbrace \log \frac{(p+q)^2}{\PP(X=Y)(p^2+q+pq)}, \log \frac{(1-p-q)q}{\PP(X\ne Y)(p^2+q+pq)}, \log \frac{(p+q)(1-p-q)}{\PP(X=Y)\frac{1}{2}(1-p^2-q-pq)}, \right. \nonumber \\
&\hspace{40.0mm} \left.   \log \frac{(1-p-q)p}{\PP(X=Y)\frac{1}{2}(1-p^2-q-pq)}, \log \frac{(1-p-q)^2}{\PP(X\ne Y)(1-p^2-q-pq)} \right \rbrace \label{eq_Dmax_comp1}
\end{align}
and
\begin{align}
&D_{\max}(Q_{XY'Z'}\| P_{XYZ}) \nonumber \\
&= \max \left \lbrace \log \frac{\PP(X=Y)(p^2+q+pq)}{(p+q)^2}, \log \frac{\PP(X\ne Y)(p^2+q+pq)}{(1-p-q)q}, \log \frac{\PP(X=Y)\frac{1}{2}(1-p^2-q-pq)}{(p+q)(1-p-q)}, \right. \nonumber \\
&\hspace{40.0mm} \left.   \log \frac{\PP(X=Y)\frac{1}{2}(1-p^2-q-pq)}{(1-p-q)p}, \log \frac{\PP(X\ne Y)(1-p^2-q-pq)}{(1-p-q)^2} \right \rbrace \, .\label{eq_Dmax_comp2}
\end{align}

We are now ready to state the conclusion of this example.
For $\kappa<\infty$, $p=\frac{1}{2}$, $q=0$, and $n$ sufficiently large we find by combining~\eqref{eq_CMI_comp} with~\eqref{eq_Dmax_comp1}
\begin{align} \label{eq_resEx1}
\kappa \, D_{\max}\big(P_{XYZ}\| \cR_{Y \to YZ}(P_{XY}) \big) = \kappa <\frac{n}{4} - \log 6 \leq I(X:Z|Y)_P \, .
\end{align}
For $\kappa<\infty$, $p=q=\frac{1}{4}$, and $n$ sufficiently large~\eqref{eq_CMI_comp} and~\eqref{eq_Dmax_comp2} imply 
\begin{align}
\kappa\, D_{\max}\big( \cR_{Y \to YZ}(P_{XY}) \| P_{XYZ} \big)  = \kappa \log \frac{15}{8} < \frac{n}{4} - \log 6 \leq I(X:Z|Y)_P  \, .
\end{align}
This shows that there exist classical tripartite distributions $P_{XYZ}$ with a large conditional mutual information $I(X:Y|Z)_P$ and a recovery map $\cR_{Y \to YZ}$ such that $\cR_{Y \to YZ}(P_{XY})$ is close to $P_{XYZ}$ and $\cR_{Y \to YZ}(P_{Y})=P_{YZ}$. The closeness is measured with respect to the max-relative entropy.

\subsection{Tightness of the $\Lambda_{\max}$-term}\label{sec_Ex_tight}
In this section we construct a classical example showing that our main result, i.e.,~\eqref{eq_main} is essentially tight in the sense that it is no longer valid if the max-relative entropy in the definition of $\Lambda_{\max}(\rho_{AB} \| \cR_{B\to B})$, given in~\eqref{eq_Lambda}, is replaced with an $\alpha$-R\'enyi relative entropy for any $\alpha <\infty$. More precisely, for $\alpha \in [1,\infty]$ we define
\begin{align} \label{eq_Lambda_alpha}
  \Lambda_{\alpha}(\rho \| \cE) := \inf_{\tau \in \mathrm{Inv}(\cE)}  D_{\alpha}(\rho \| \tau) \ .
\end{align}
For $\alpha =\infty$ we have $\Lambda_{\infty}(\rho \| \cE) =  \Lambda_{\max}(\rho \| \cE)$. In this section we show that for all $\alpha <\infty$ there exits a (classical) tripartite state $\rho_{ABC}$ and a recovery map $\cR_{B \to BC}$ that satisfies $\cR_{B\to BC}(\rho_B)=\rho_{BC}$ such that
\begin{align} \label{eq_TSquantum}
D\big(\rho_{ABC} \| \cR_{B\to BC}(\rho_{AB}) \big)  < I(A:C|B)_{\rho} -  \Lambda_{\alpha}(\rho_{AB}\| \cR_{B \to B}) \, .
\end{align}

To see this consider the following classical example (where we switch to the classical notation).
 Let $\cS=\{0,\ldots,2^{n}-1 \}$ and consider a tripartite distribution $Q_{XYZ}$ defined via the random variables $X\sim \cU(\cS)$ and $X=Y=Z$. Let $Q'_{XYZ}$ be the distribution defined via the random variables $X \sim \cU(\cS)$, $Y \sim \cU(\cS)$ where $X$ and $Y$ are independent and $Z= (X+Y)\mod 2^n$. For $p\in [0,1]$ we define a binary random variable $E$ such that $\PP(E=0)=p$. Consider the distribution
\begin{align}
P_{XYZ}=\left \lbrace \begin{array}{l l}
Q_{XYZ} & \text{if } E=0 \\
Q'_{XYZ}& \text{if } E=1  \, .
\end{array} \right .
\end{align}
We next define two recovery maps $\tilde \cR_{Y\to Y' Z'}$ and $\bar \cR_{Y \to Y'Z'}$ that create the tuples $(Y',Z')$ out of $Y$ such that 
\begin{align}
(Y',Z')= (Y,Y) \qquad \text{and} \qquad (Y',Z')=\big(U,(Y-U)\!\!\!\!\! \mod 2^n \big) \, ,
\end{align}
where $U \sim  \cU(\cS)$, respectively. We then define another recovery map as 
\begin{align}
\cR_{Y \to Y' Z'}:=p \tilde \cR_{Y \to Y' Z'} +{(1-p)} \bar \cR_{Y \to Y' Z'} \, .
\end{align}
We note that the recovery map satisfies $\cR_{Y\to Y'Z'}(P_{Y})=P_{YZ}$.
A simple calculation shows that
\begin{align}
H(X|YE)_P = p H(X|Y)_Q + (1-p)H(X|Y)_{Q'} = (1-p) n
\end{align}
and
\begin{align}
H(X|YZE)_P = p H(X|YZ)_Q + (1-p)H(X|YZ)_{Q'} = 0 \, .
\end{align}
We thus find
\begin{align}
I(X:Z|Y)_{P} &= H(X|Y) - H(X|YZ) \\
& \geq H(X|YE) - H(X|YZE) - I(X:E|YZ) \\ 
&\geq (1-p)n -h(p)  \, . \label{eq_CMI1}
\end{align}
The distribution $\cR_{Y\to Y'Z'}(P_{XY})$ generated by applying the recovery map to $P_{XY}$ can be decomposed as
\begin{align}
\cR_{Y\to Y'Z'}(P_{XY})= p \left(p \tilde S_{XYZ} + (1-p) \bar S_{XYZ} \right) + (1-p) \left(p \tilde S'_{XYZ} + (1-p) \bar S'_{XYZ} \right) \, ,
\end{align}
where $\tilde S_{XYZ} =\tilde \cR_{Y \to Y' Z'}(Q_{XY})$, $\bar S_{XYZ}=\bar \cR_{Y \to Y' Z'}(Q_{XY})$, $\tilde S'_{XYZ} =\tilde \cR_{Y \to Y' Z'}(Q'_{XY})$, and $\bar S'_{XYZ}=\bar \cR_{Y \to Y' Z'}(Q'_{XY})$.
The joint convexity of the relative entropy~\cite[Theorem~2.7.2]{cover} then implies
\begin{align}
&D\big(P_{XYZ} \| \cR_{Y\to Y'Z'}(P_{XY}) \big) \nonumber \\
& \hspace{10mm} \leq p D\big( Q_{XYZ} \| p \tilde S_{XYZ} + (1-p) \bar S_{XYZ} \big) + (1-p) D\big(Q'_{XYZ} \|  p \tilde S'_{XYZ} + (1-p) \bar S'_{XYZ} \big)
\end{align}
A simple calculation shows that
\begin{align}
D\big( Q_{XYZ} \| p \tilde S_{XYZ} + (1-p) \bar S_{XYZ} \big)
&= \sum_{x=y=z} Q_{XYZ}(x,y,z)\log \frac{Q_{XYZ}(x,y,z)}{p \tilde S_{XYZ}(x,y,z) + (1-p) \bar S_{XYZ}(x,y,z)} \nonumber \\
&\leq \frac{2^{-n}}{p 2^{-n}} = \log \frac{1}{p}
\end{align}
and
\begin{align}
&D\big( Q'_{XYZ} \| p \tilde S'_{XYZ} + (1-p) \bar S'_{XYZ} \big) \nonumber \\
&\hspace{15mm}=\sum_{x,y,z=x+y \!\!\!\!\!\mod 2^n } \!\!\!\!\!Q'_{XYZ}(x,y,z)\log\frac{Q'_{XYZ}(x,y,z)}{p \tilde S'_{XYZ}(x,y,z) + (1-p) \bar S'_{XYZ}(x,y,z)}  \\
&\hspace{15mm}\leq \frac{2^{-2n}}{p 2^{-2n}} = \log \frac{1}{p} \, .
\end{align}
We thus have
\begin{align} \label{eq_relEnt3}
D\big(P_{XYZ} \| \cR_{Y\to Y'Z'}(P_{XY}) \big)\leq \log\frac{1}{p} \, .
\end{align}
We note that the recovery map $\cR_{Y \to Y'} = \tr_{Z'} \circ \cR_{Y \to Y' Z'}$ leaves the uniform distribution $Q'_{XY}$ invariant, i.e., $\cR_{Y \to Y'} (Q'_{XY})= Q'_{XY}$. As a result we find
\begin{align} \label{eq_relEnt2}
\Lambda_{\alpha}(P_{XY} \| \cR_{Y \to Y'}) 
&\leq D_{\alpha}(P_{XY}\| Q'_{XY}) 
=\frac{1}{\alpha -1} \log \big(2^{-n} (1-p)^{\alpha}(2^n -1) + 2^{-n}(1-p+p 2^{n})^\alpha \big) \, ,
\end{align}
where the final step follows by definition of the $\alpha$-R\'enyi relative entropy and a straightforward calculation.

Recall that we need to prove~\eqref{eq_TSquantum}, which in the classical notation reads as
\begin{align} \label{eq_toshow}
D\big(P_{XYZ} \| \cR_{Y\to Y'Z'}(P_{XY}) \big) + \Lambda_{\alpha}(P_{XY} \| \cR_{Y \to Y'})  <I(X:Z|Y)_P  \, ,
\end{align}
for all $\alpha <\infty$.
As mentioned in~\eqref{eq_mono}, the $\alpha$-R\'enyi relative entropy is monotone in $\alpha$ which shows that it suffices to prove~\eqref{eq_toshow} for all $\alpha  \in  (\alpha_0,\infty)$, where $\alpha_0\geq 0$ can be arbitrarily large. 

Combining~\eqref{eq_relEnt3} and~\eqref{eq_relEnt2} shows that for any $\alpha \in (\alpha_0,\infty)$ where $\alpha_0$ is sufficiently large, $p=\alpha^{-2}$, and $n=\alpha$
\begin{align}
D\big(P_{XYZ} \| \cR_{Y\to Y'Z'}(P_{XY}) \big) + \Lambda_{\alpha}(P_{XY} \| \cR_{Y \to Y'})
&\leq 2 \log \alpha + \frac{1}{\alpha-1} \log \left( 1 +2^{-\alpha} (1+ \alpha^{-2} 2^{\alpha})^\alpha \right)\, ,
\end{align}
where we used that $(1-\alpha^{-2})^\alpha(2^\alpha-1) \leq 2^{\alpha}$ for $\alpha \geq 1$.
Using the simple inequality $\log(1+x)\leq \log x + \frac{2}{x}$ for $x\geq1$ gives
\begin{align}
&D\big(P_{XYZ} \| \cR_{Y\to Y'Z'}(P_{XY}) \big) + \Lambda_{\alpha}(P_{XY} \| \cR_{Y \to Y'})  \nonumber \\
&\hspace{30mm} \leq 2 \log \alpha -\frac{\alpha}{\alpha -1} + \frac{\alpha}{\alpha -1} \log\left(1+\frac{2^\alpha}{\alpha^2} \right) + \frac{2}{\alpha -1} 2^\alpha\left(1+\frac{2^\alpha}{\alpha^2} \right)^{-\alpha} \\
&\hspace{30mm} \leq 2 \log \alpha -\frac{\alpha}{\alpha -1} + \frac{\alpha}{\alpha -1} \log\left(1+\frac{2^\alpha}{\alpha^2} \right) + 2^{-\alpha}  \, ,
\end{align}
where the final step is valid since $\alpha$ is assumed to be sufficiently large. 
Using once more $\log(1+x)\leq \log x + \frac{2}{x}$ for $x\geq1$ gives
\begin{align}
&D\big(P_{XYZ} \| \cR_{Y\to Y'Z'}(P_{XY}) \big) +  \Lambda_{\alpha}(P_{XY} \| \cR_{Y \to Y'}) \nonumber \\
&\hspace{30mm}\leq 2 \log \alpha +\frac{\alpha}{\alpha -1} \left( \alpha -2 \log \alpha -1 + \frac{2\alpha^2}{2^\alpha} \right)+ 2^{-\alpha}   \\
&\hspace{30mm}= \alpha -\frac{2}{\alpha-1} \log \alpha + 2^{-\alpha} \mathrm{poly}(\alpha) \, ,
\end{align}
where $\mathrm{poly}(\alpha)$ denotes an arbitrary polynomial in $\alpha$.
As a result, we obtain for a sufficiently large $\alpha$
\begin{align}
D\big(P_{XYZ} \| \cR_{Y\to Y'Z'}(P_{XY}) \big) + \Lambda_{\alpha}(P_{XY} \| \cR_{Y \to Y'}) 
&< \alpha - \frac{2}{\alpha} \label{eq_strict}\\
&\leq \alpha - \alpha^{-1} - h(\alpha^{-2}) \label{eq_binEnt}\\
&\leq I(X:Z|Y)_P \label{eq_finals} \, .
\end{align}
The two steps~\eqref{eq_strict} and~\eqref{eq_binEnt} are both valid because $\alpha$ is sufficiently large. The final step uses~\eqref{eq_CMI1}.

This example shows that~\eqref{eq_main} is no longer valid if the $\Lambda_{\max}$-term is replaced with a $\Lambda_{\alpha}$-term for any $\alpha \in [\frac{1}{2},\infty)$.\footnote{The example does not work in the limit $\alpha \to \infty$.} Note also that this example implies Remark~\ref{rmk_triangle} on the tightness of the triangle-like inequality for the relative entropy.

\section{Open questions}
In this article, we introduced a new entropic quantity~$\Lambda_{\max}(\rho_{AB}\| \cR_{B\to B})$ hat measures how much the map $\cR_{B \to B}$ disturbs the $B$ system, taking system $A$ as a reference. It would be interesting to better understand this quantity and its properties. For example in case $\rho_{ABC}$ is a state whose marginals are all flat\footnote{A state $\omega$ is called flat if all its eigenvalues are either zero or equal to the same constant.}, is it possible to bound $\Lambda_{\max}(\rho_{AB}\| \cR_{B\to B})$ in terms of $D(\rho_{ABC}\|\cR_{B \to BC}(\rho_{AB}))$ from above? This would considerably simplify our main result~\eqref{eq_main} for this special case, which is of interest, e.g.~in applications to condensed matter physics.

\section*{Acknowledgments} 
We thank Mario Berta and Marco Tomamichel for advice on Lemma~\ref{lem_triangleD}. We further thank Aram W.~Harrow, Franca Nester, Joseph M.~Renes, and Volkher B.~Scholz for inspiring discussions. We acknowledge support by the Swiss National Science Foundation (SNSF) via the National Centre of Competence in Research ``QSIT'' and Project No.~$200020\_165843$. We further acknowledge support by the Air Force Office of Scientific Research (AFOSR) via grant FA9550-16-1-0245.

\appendix

\section{Approximate Markov chains can be far from Markov chains} \label{app_closeMarkov}
As mentioned in the introduction, it is known~\cite{CSW12,ILW08} that there exist tripartite states with a small conditional mutual information whose distance to any Markov chain is nevertheless large. For example, consider a state $\rho_{S_1,\ldots S_d} = \proj{\psi}_{S_1,\ldots S_d}$ on $S_1 \otimes \cdots \otimes S_d$ with $\dim S_k = d >1$ for all $k=1,\ldots,d$, where
\begin{align}
\ket{\psi}_{S_1,\ldots S_d} :=\sqrt{\frac{1}{d!}} \sum_{\pi \in \cS_d} \mathrm{sign}(\pi) \ket{\pi(1)} \otimes \ldots \otimes \ket{\pi(d)}
\end{align}
is the \emph{Slater determinant}, $\cS_d$ denotes the group of permutations of $d$ objects, and $\mathrm{sign}(\pi):=(-1)^L$, where $L$ is the number of transpositions in a decomposition of the permutation $\pi$.
The chain rule and the trivial upper bound for the mutual information show that we have
\begin{align}
I(S_1:S_2 \ldots S_d)_{\rho} =  \sum_{k=2}^d I(S_1:S_k | S_2 \ldots S_{k-1})_{\rho}\leq 2 \log d \, .
\end{align}
Because the mutual information is nonnegative, there exists $k \in \{2,\ldots,d\}$ such that 
\begin{align} \label{eq_CMIsmall}
I(S_1:S_k| S_2 \ldots S_{k-1})_{\rho}\leq \frac{2}{d-1} \log d \, ,
\end{align}
which can be arbitrarily small (as $d$ gets large). By definition, the reduced state $\rho_{S_1 S_k}$ is the \emph{antisymmetric state} on $S_1 \otimes S_k$ that is far from separable~\cite[p.~53]{brandao16}. More precisely, for any separable state $\sigma_{S_1 S_k}$ on $S_1 \otimes S_k$  we have $\Delta(\rho_{S_1 S_k},\sigma_{S_1 S_k}) \geq \frac{1}{2}$, where $\Delta(\tau,\omega):=\frac{1}{2} \norm{\tau-\omega}_1$ denotes the \emph{trace distance} between $\tau$ and $\omega$. 
For any state $\mu_{S_1 \ldots S_k}$ on $S_1 \otimes \cdots \otimes S_k$ that forms a Markov chain in order $S_1 \leftrightarrow S_2 \ldots S_{k-1} \leftrightarrow S_k$, it follows by~\eqref{eq_MarkovDec} that its reduced state $\mu_{S_1 S_k}$ on $S_1 \otimes S_k$ is separable. Using the monotonicity of the trace distance under trace-preserving completely positive maps~\cite[Theorem~9.2]{nielsenChuang_book} we thus find
\begin{align}
\Delta(\rho_{S_1 \cdots S_k},\mu_{S_1 \cdots S_k}) \geq \Delta(\rho_{S_1 S_k},\mu_{S_1 S_k}) \geq \frac{1}{2} \, ,
\end{align}
showing that the state $\rho_{S_1 \cdots S_k}$ despite having a conditional mutual information that is arbitrarily small (see~\eqref{eq_CMIsmall}) is far from any Markov chain.

As discussed in the introduction, states with a small conditional mutual information are called approximate Markov chains (which is justified by~\eqref{eq_FR}). The example in this appendix shows that approximate quantum Markov chains are not necessarily close to quantum Markov chains.

\bibliographystyle{arxiv_no_month}
\bibliography{bibliofile}
      
\end{document}